\def\Hy@safe@activestrue{}
\def\mod{\mathop{\rm mod}}
\def\mat#1{\mathcal{#1}}
\newtheorem{theorem}{Theorem}
\newtheorem{consequence}[theorem]{Corollary}
\newtheorem{lemma}{Lemma}
\begin{document}

\title{Fault-Tolerance of "Bad" Quantum Low-Density Parity Check Codes }

\author{Alexey A. Kovalev}

\affiliation{Department of Physics \& Astronomy, University of California,
  Riverside, California 92521, USA}

\author{Leonid P. Pryadko}

\affiliation{Department of Physics \& Astronomy, University of California,
  Riverside, California 92521, USA}

\begin{abstract}
    Quantum low-density parity check (LDPC) codes such as generalized toric codes with finite rate suggested by Tillich and Z\'emor offer an
alternative route for quantum computation. Here, we study LDPC codes and
  show that any family of LDPC codes, quantum or classical, where
  distance scales as a positive power of the block length, has a finite error threshold. Based on that, we conclude that quantum LDPC codes, for sufficiently
  large quantum computers, can offer an advantage over the toric codes.
\end{abstract}

\date{\today}

\pacs{03.67.Lx, 03.67.Pp,  64.60.ah}

\maketitle

A practical implementation of a quantum computer will rely on quantum error correction (QEC) \cite{shor-error-correct,Knill-Laflamme-1997,Bennett-1996} due to the fragility of quantum states.
There is a strong belief that surface (toric) codes \cite{kitaev-anyons,Dennis-Kitaev-Landahl-Preskill-2002} can offer the fastest route to scalable quantum computer due to the error threshold exceeding 1\% and locality of required gates \cite{Raussendorf-Harrington-2007,Wang-Fowler-Austin-Hollenberg-Lloyd-2011,Bombin-PRX-2012}. Unfortunately, in the nearest future,
the surface codes (in fact, any two-dimensional codes with local stabilizer
generators\cite{Bravyi-Poulin-Terhal-2010}) can only lead to proof of the principle realizations as they encode a limited number of qubits (k), making a practical implementation of a useable
quantum computer expensive in terms of the required number $n$ of physical qubits (e.g. 220 million physical qubits are necessary for a useful realization
of Shor's algorithm \cite{Fowler:2012arXiv}).

A large family of quantum low-density
parity-check (LDPC) codes (a non-local generalization of toric codes) has been
constructed by Tillich and Z\'emor\cite{Tillich2009}.  These quantum
hypergraph-product codes (QHPCs) contain families of CSS codes with finite asymptotic rate $R\equiv k/n$
which substantially improves upon toric codes where $R=0$ (see Fig.~\ref{fig:Visualization}).
This construction can also be modified to generalize the
rotated toric codes\cite{Kovalev-PRA2011} (e.g., checkerboard codes) with a finite-factor rate
improvement\cite{Kovalev-arxiv2012}.  Just as for the toric
codes, the distance of QHPCs scales as a square root of the block length,
$d\propto n^{1/2}$. 

In general, removing the restriction of locality should considerably improve the code parameters. Non-local two-qubit gates are relatively
inexpensive with floating gates \cite{Loss:PRX2012}, superconducting and trapped-ion qubits, as well as more
exotic schemes with teleportation\cite{yamamoto-cnot-2003,%
  Martinis-science-2005,Benhelm-2008,Friedenauer-2008,%
  Bennett-teleportation-1993,%
  Gottesman-Chuang-1999}. Thus, one can consider a much wider class of quantum LDPC codes\cite{Postol-2001,MacKay-Mitchison-McFadden-2004} for which,
compared to general quantum codes, each quantum measurement involves fewer qubits,
measurements can be done in parallel, and also the classical processing can be
enormously simplified. 

\begin{figure}[htbp]
\centering \includegraphics[width=0.48\columnwidth]{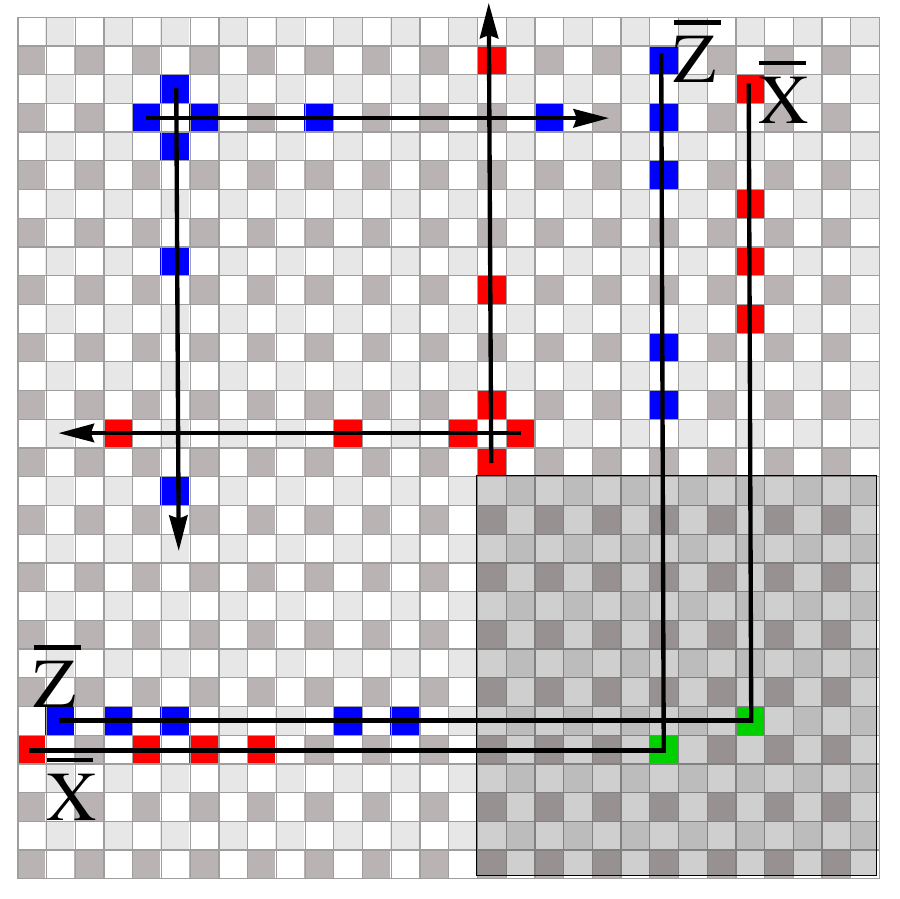}\hskip0.1in
\includegraphics[width=0.48\columnwidth]{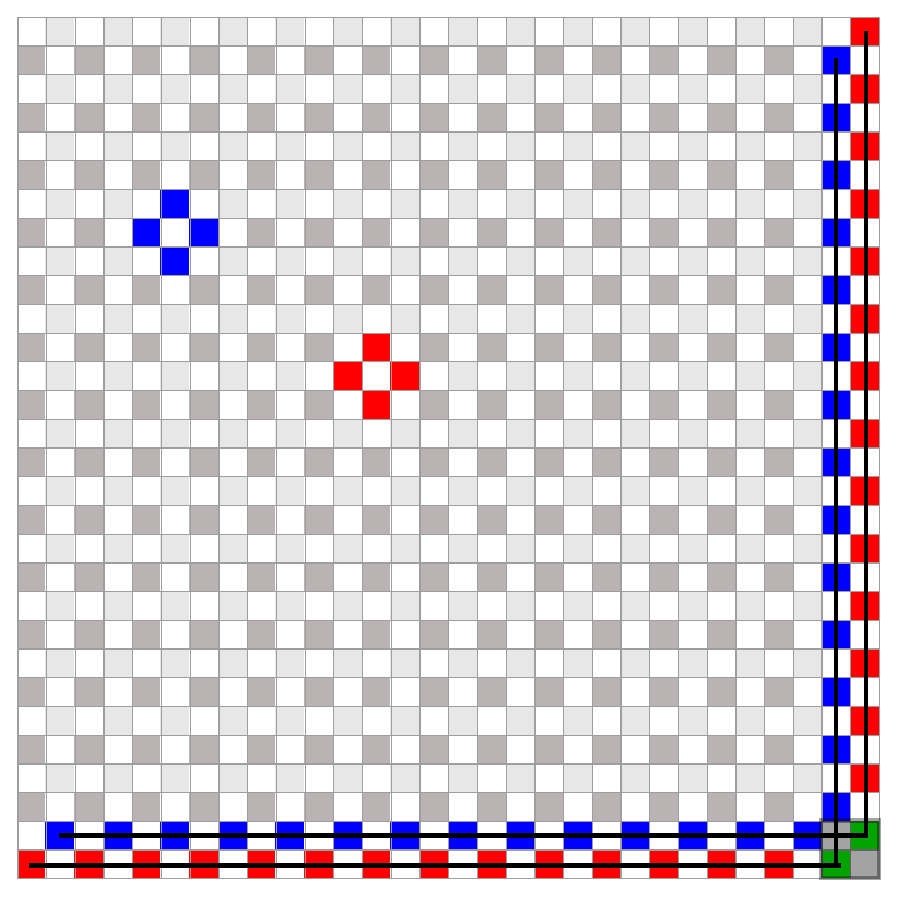} \caption{(Color online) Left: Two stabilizer generators (marked by arrows)
and two pairs of anticommuting logical operators (marked by lines)
of a $[[450,98,5]]$ code in Eq.~(\ref{eq:Till}) formed by circulant
matrices $\mathcal{H}_{1}=\mathcal{H}_{2}$ with the first row $[1,1,0,1,0,0,0,1,0\dots 0]$ (red -- $X$ operators, blue
-- $Z$ operators, green -- overlap of $Z$ and $X$ operators, dark
and light gray -- dual sublattices of physical qubits). Other
stabilizer generators are obtained by shifts over the same sublattice
with periodic boundaries. Shaded regions: each gray square uniquely
corresponds to a pair of logical operators, thus $98$ encoded logical
qubits. Right: same for the toric code $[[450,2,15]]$.}
\label{fig:Visualization} 
\end{figure}

Furthermore, at sufficiently large blocklength, a fault tolerant family of quantum LDPC codes with
a finite asymptotic rate will require fewer physical
qubits compared to a realization based on copies of toric code. Achieving small computational error requires use of codes with large
blocklength, thus, realization of a quantum computer based on a finite-rate quantum LDPC code can reduce the required number of physical qubits. 
Unfortunately, the parameters as well as fault tolerance of general quantum LDPC codes are largely unexplored.

In this Letter, we discuss
error-correction properties and fault-tolerance of families of finite-rate quantum
(and where noted classical) LDPC codes whose relative distance $\delta\equiv d/n$ tends to
zero in the limit of large blocklength. Even though we term such codes as "bad" (as good codes should have finite rate and finite relative distance $d/n$, see Ref.~\cite{Calderbank-Shor-1996}) these are the best finite-rate quantum LDPC codes with explicitly known distance.  For random uncorrelated (qu)bit errors (\emph{e.g., quantum depolarizing channel}), we establish the
existence and give a lower bound for the single (qu)bit error rate below which
the decoding with probability one is possible, and analyze the scaling of
successful decoding probability with the blocklength. This result is obtained by separating errors into small independent clusters on a graph, a construction analogous to the cluster theorem \cite{Domb-Green-bookX}.  We also give related
bounds for fault-tolerant operation in the presence of syndrome measurement
errors.  A similar analysis for errors when the erroneous (qu)bits are known (\emph{erasure channel}) allows us to
establish an upper limit for the achievable rate of a quantum LDPC code with
power-law scaling of the distance with blocklength.  These results are important since, unlike for regular QEC codes, there are very few
general lower (existence) or upper bounds for quantum LDPC
codes\cite{Delfosse-Zemor-2012}.

\textsc{Definitions.}  A binary linear code $\mathcal{C}$ with
parameters $[n,k,d]$ is a $k$-dimensional subspace of the vector space
$\mathbb{F}_2^n$ of all binary strings of length $n$.  Code distance
$d$ is the minimal weight (number of non-zero elements) of a non-zero
string in the code.  A linear code is uniquely specified by the binary
parity check matrix $H$, namely $\mathcal{C}=\{\mathbf{c}\in
\mathbb{F}_2^n| H \mathbf{c}=0\}$, where operations are done $\mod 2$.

A quantum $[[n,k,d]]$ (qubit) stabilizer code $\mathcal{Q}$ is a
$2^k$-dimensional subspace of the $n$-qubit Hilbert space
$\mathbb{H}_{2}^{\otimes n}$, a common $+1$ eigenspace of all operators in an
Abelian \emph{stabilizer group} $\mathscr{S}\subset\mathscr{P}_{n}$,
$-\openone\not\in\mathscr{S}$, where the $n$-qubit Pauli group
$\mathscr{P}_{n}$ is generated by tensor products of the $X$ and $Z$
single-qubit Pauli operators.  The stabilizer is typically specified in terms
of its generators, $\mathscr{S}=\left\langle S_{1},\ldots,S_{n-k}\right\rangle
$; measuring the generators $S_i$ produces the \emph{syndrome} vector.  The
weight of a Pauli operator is the number of qubits it affects.  The distance $d$ of a quantum code is the
minimum weight of an operator $U$ which commutes with all operators from the
stabilizer $\mathscr{S}$, but is not a part of the stabilizer,
$U\not\in\mathscr{S}$. A code of distance $d$ can detect any error of weight up to $d-1$, and correct up to $\lfloor d/2\rfloor$.

A Pauli operator $U\equiv i^{m}X^{\mathbf{v}}Z^{\mathbf{u}}$, where
$\mathbf{v},\mathbf{u}\in\{0,1\}^{\otimes n}$ and
$X^{\mathbf{v}}=X_{1}^{v_{1}}X_{2}^{v_{2}}\ldots X_{n}^{v_{n}}$,
$Z^{\mathbf{u}}=Z_{1}^{u_{1}}Z_{2}^{u_{2}}\ldots Z_{n}^{u_{n}}$, can be
mapped, up to a phase, to a quaternary vector, $\mathbf{e}\equiv
\mathbf{u}+\omega \mathbf{v}$, where $\omega^2\equiv
\overline{\omega}\equiv\omega+1$.  A product of two quantum operators
corresponds to a sum ($\mod2$) of the corresponding vectors.  Two Pauli
operators commute if and only if the \emph{trace inner product} $\mathbf{e}_1
* \mathbf{e}_2 \equiv \mathbf{e}_1 \cdot \overline{\mathbf{e}}_2 +
\overline{\mathbf{e}}_1 \cdot \mathbf{e}_2$ of the corresponding vectors is
zero, where $\overline{\mathbf{e}} \equiv \mathbf{u} + \overline\omega
\mathbf{v}$.

With this map, generators of a stabilizer group are mapped to rows of a parity
check matrix $H$ of an \emph{additive} (forming a group with respect to
addition but not necessarily over the full set of $\mathbb{F}_4$ operations)
code over $\mathbb{F}_4$, with the condition that the trace inner
product of any two rows vanishes\cite{Calderbank-1997}.  The vectors generated
by rows of $H$ correspond to stabilizer generators which act trivially on the
code; these vectors form the \emph{degeneracy group} and are omitted from the
distance calculation.  For a more narrow set of Calderbank-Shor-Steane (CSS)
codes the parity check matrix is a direct sum $H=G_x\oplus \omega G_z$, and
the commutativity condition simplifies to $G_{x}G_{z}^{T}=0$.

A LDPC code, quantum or classical, is a code with a
sparce parity check matrix.  For a regular $(j,l)$ LDPC code, every column and
every row of $H$ have weights $j$ and $l$ respectively, while for a
$(j,l)$-limited LDPC code these weigths are limited from above by $j$ and $l$.

The QHPCs \cite{Tillich2009} (Fig.~\ref{fig:Visualization}) are constructed from two binary matrices,
$\mat{H}_{1}$ (dimensions $r_{1}\times n_{1}$) and $\mat{H}_{2}$ (dimensions
$r_{2}\times n_{2}$), as a CSS code with the stabilizer 
\cite{Kovalev-arxiv2012}
\begin{equation}
\begin{array}{c}
{\displaystyle G_{x}=(E_{2}\otimes\mathcal{H}_{1},\mathcal{H}_{2}\otimes E_{1}),}\\
{\displaystyle G_{z}=(\mathcal{H}_{2}^{T}\otimes\widetilde{E}_{1},\widetilde{E}_{2}\otimes\mathcal{H}_{1}^{T}).}
\end{array}\label{eq:Till}
\end{equation}
Here each matrix is composed of two blocks constructed as Kronecker products
(denoted with {}``$\otimes$''), and $E_{i}$ and $\widetilde{E}_{i}$, $i=1,2$,
are unit matrices of dimensions given by $r_{i}$ and $n_{i}$, respectively.
In the original construction\cite{Tillich2009}, given the binary parity check
matrix $\mathcal{H}_1=\mathcal{H}_2^{T}$ of an $(h,v)$-limited classical LDPC
code $[n_\mathrm{c},k_\mathrm{c},d_\mathrm{c}]$, the QHPC~(\ref{eq:Till}) is a
CSS code with the parameters
$[[n=n_\mathrm{c}^2+(n_\mathrm{c}-k_\mathrm{c})^2,k=k_\mathrm{c}^2,
d=d_\mathrm{c}]]$, and column and row weights limited by $j\le \max(h,v)$,
$\ell\le h+v$.  An original classical LDPC code produces a quantum LDPC code,
and the corresponding distance scales as $d\propto n^{1/2}$.

\textsc{Our key observation} is that for LDPC codes, quantum or classical,
large-weight errors are mostly those composed of \emph{disjoint} small-weight
clusters that can be detected or corrected independently.  Indeed, e.g., for a
regular $(j,\ell)$ LDPC code, two random (qu)bits have non-zero values in the
same row with probability $z/n$, where $z\equiv (\ell-1)j$.  Correcting any of
such disjoint errors does not affect the syndrome for the others.

More generally, for a $(j,\ell)$-limited LDPC code, we represent all (qu)bits
as nodes of a graph ${\cal G}_1$ of degree at most $z$: two nodes are
connected by an edge iff there is a row in the parity check matrix which has
non-zero values at both positions.  An error with support in a subset ${\cal
  E}\subseteq V({\cal G}_1)$ of the vertices defines the subgraph ${\cal
  G}_1({\cal E})$ induced by ${\cal E}$.  Generally, we will not make a
distinction between a set of vertices and the corresponding induced subgraph.
In particular, a (connected) cluster in ${\cal E}$ corresponds to a connected
subgraph of ${\cal G}_1({\cal E})$.  Different clusters affect disjoint sets
of rows of the parity check matrix.  This implies the following
\begin{lemma}
  \label{lemma:cluster-detectable}
  For a distance-$d$ LDPC code, any error whose support is a union of
  disconnected clusters on ${\cal G}_1$ of weight $w_i<d$, is
  detectable.
\end{lemma}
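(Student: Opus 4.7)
The plan is to translate the disconnection of the clusters in $\mathcal{G}_1$ into a block decomposition of the syndrome vector, and then apply the distance hypothesis to each cluster independently.

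The key combinatorial observation, implicit in the definition of $\mathcal{G}_1$, is that the support of any single row of the parity-check matrix $H$ forms a clique in $\mathcal{G}_1$ and hence lies inside one connected component. Consequently, if the error support $\mathcal{E}$ splits into pairwise disconnected clusters $\mathcal{E}_1,\ldots,\mathcal{E}_m$, every row of $H$ has nonzero entries in at most one of the $\mathcal{E}_i$. Writing the error as a product $E=E_1\cdots E_m$ with $E_i$ supported on $\mathcal{E}_i$, I would partition the rows of $H$ into disjoint sets $R_i$ (rows whose support meets $\mathcal{E}_i$) together with $R_0$ (rows disjoint from $\mathcal{E}$); the syndrome bits from $R_i$ then depend only on $E_i$, and those from $R_0$ all vanish. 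In short, the syndrome of $E$ is the disjoint concatenation of the per-cluster syndromes of the $E_i$.

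Next I would apply the distance hypothesis cluster by cluster. Since $\wgt(E_i)=w_i<d$, each $E_i$ is individually detectable: in the classical case this forces $E_i=0$ or nonzero syndrome; in the quantum case, it forces either nonzero syndrome or $E_i\in\mathscr{S}$. A two-way case split concludes the argument. If some $E_i$ has nonzero syndrome, the concatenation property shows that $E$ itself produces a nonzero syndrome and is detected. Otherwise every $E_i$ is trivial (zero classically, or an element of $\mathscr{S}$ quantumly), so $E=\prod_i E_i$ is itself trivial and therefore detectable by definition.

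The only place that calls for genuine care is the quantum case. One must check that the restriction $E_i$ of a full-system Pauli to the qubits of cluster $i$ (padded with identities elsewhere) is itself a bona fide Pauli operator for which membership in $\mathscr{S}$ is meaningful, and then invoke the contrapositive of the quantum distance definition --- any weight-$<d$ Pauli that commutes with every stabilizer generator must already lie in $\mathscr{S}$. Once this is set up cleanly, the remainder is just bookkeeping about disjoint row supports, and the lemma follows.
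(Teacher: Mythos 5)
Your proof is correct and follows essentially the same route as the paper, which derives the lemma directly from the observation that different clusters affect disjoint sets of rows of the parity-check matrix; your clique argument and row partition $R_0,R_1,\ldots,R_m$ just make that disjoint-syndrome decomposition explicit. The additional care you take in the quantum case (weight-$<d$ restrictions with zero syndrome must lie in $\mathscr{S}$, so an all-trivial error is itself trivial) is exactly the intended reading of ``detectable'' and introduces no new ideas beyond the paper's argument.
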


In the case of an erasure channel (quantum or classical), we actually know
which (qu)bits are affected.  In known locations, a code of distance $d$ can
correct all errors of weight $d-1$ or smaller.  Therefore, correcting clusters
one-by-one, we can guarantee success if all the clusters have weights $w_i<d$.
It is then obvious that the problem of error correction for an erasure channel
is related to the problem of site percolation on graphs\footnote{This relation
  has been noticed in
  Refs.~\protect\cite{Dennis-Kitaev-Landahl-Preskill-2002,%
    Delfosse-Zemor-2012}}.

For any $(j,\ell)$-limited LDPC code, the vertices of the graph
$\mathcal{G}_1$ have degrees at most $z\equiv(\ell-1)j$.  In what follows, we
will need the cluster size distribution [the probability $n_s^{(p)}(x)$ that
the point $x$ is a member of a cluster of size $s$] below the percolation
threshold.  Although one expects
exponential tail in the cluster size distribution, $n_s(p)\le \exp(-s
g(p))$ for all $s\ge s_0$ and some $g(p)>0$ and $s_0>0$, this can be
violated for sufficiently heterogeneous
graphs\cite{Bandyopadhyay-Steif-Timar-2010}.  Restricting the range of
$p$, we have the following lemma:
\begin{lemma}
  \label{lemma:exponential-clusters}
  For any graph ${\cal G}$ with vertex degrees limited by $z$, the
  site- or bond-percolation cluster size distribution has exponential
  tail for $p<p_0\equiv(z-1)^{-1}$.
\end{lemma}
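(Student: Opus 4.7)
The plan is to dominate the cluster containing a fixed vertex by a subcritical Galton--Watson branching process and then invoke the standard exponential tail for the total progeny of such a process. This is a classical Peierls-style argument, and the bound $p_0 = (z-1)^{-1}$ is exactly the threshold for which the natural branching-process envelope becomes subcritical.

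First I treat site percolation. Fix a vertex $x\in V(\mathcal{G})$ and build $C(x)$ by a breadth-first exploration: mark $x$ as accepted, and at each subsequent step examine one unexplored neighbor of an already-accepted vertex, declaring it accepted independently with probability $p$. The root $x$ has at most $z$ neighbors to examine; every non-root accepted vertex $v$ was reached through a neighbor that is already accepted, so it contributes at most $z-1$ fresh examinations. Neighbors revealed earlier along other branches only reduce the count of new Bernoulli trials, so the total number of accepted vertices is stochastically dominated by the total progeny $T$ of a Galton--Watson process in which the root has $\mathrm{Binomial}(z,p)$ offspring and every later individual has $\mathrm{Binomial}(z-1,p)$ offspring.

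For $p<p_0=(z-1)^{-1}$ the post-root offspring mean $\mu = p(z-1)$ is strictly below $1$, so the process is subcritical. Since $\mathrm{Binomial}(z-1,p)$ has a moment generating function that is finite everywhere, a standard computation with the functional equation $f_T(s)=s\,\varphi(f_T(s))$ for the progeny generating function (or a direct Chernoff bound applied generation by generation) shows $\mathbb{E}[e^{\lambda T}]<\infty$ for some $\lambda>0$, so $P(T\ge s)\le e^{-sg(p)}$ with $g(p)>0$ for all $s\ge s_0$. Since $n_s^{(p)}(x)\le P(|C(x)|\ge s)\le P(T\ge s)$, this gives the claimed exponential tail.

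Bond percolation is handled by the same exploration with edges as the randomized objects: the edge that first revealed a vertex has already been examined, leaving at most $z-1$ new edges at each non-root vertex, each open independently with probability $p$, and the same Galton--Watson domination applies. The only step that requires any care is the stochastic domination itself, since previously examined neighbors must be handled so that the coupling remains one-sided; the subcritical-branching-process input is standard and poses no real obstacle.
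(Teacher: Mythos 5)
Your proof is correct, but it takes a genuinely different route from the paper. You dominate the cluster of $x$ by the total progeny of a Galton--Watson process (root offspring $\mathrm{Binomial}(z,p)$, later offspring $\mathrm{Binomial}(z-1,p)$), which is subcritical precisely when $p<(z-1)^{-1}$, and then invoke the standard exponential tail for the total progeny of a subcritical process with bounded offspring; the one-sided coupling you flag (already-explored neighbors only remove Bernoulli trials) is indeed the only delicate point, and your breadth-first exploration handles it correctly for both site and bond percolation. The paper instead argues combinatorially: it writes $n_s^{(p)}(x)=\sum_t a_{s,t}(x)p^s(1-p)^t$, maps any size-$s$ cluster (after cutting loops) to the $z$-regular tree to bound the perimeter by $t_z(s)=s(z-2)+2$, and compares the distribution at $p$ with that at $p_0$, obtaining the explicit bound $n_s^{(p)}(x)\le n_s^{(p_0)}(x)\,[(1-p)^2/(1-p_0)^2]\,\alpha_z^s$ with $\alpha_z=p(1-p)^{z-2}/[p_0(1-p_0)^{z-2}]$. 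The trade-off is concreteness versus generality of technique: the paper's explicit $\alpha_z$ is used quantitatively downstream (the distance condition $d>\ln n/|\ln\alpha_z|$ in Theorem~\ref{theorem:erasure-threshold} and the condition $4(1-p)\alpha_z^2/p<1$ defining $p_1$ in Theorem~\ref{theorem:depolarizing-decoding}), whereas your $g(p)$ exists but is implicit unless you carry out the Chernoff/Otter--Dwass computation to extract comparable constants; on the other hand, your argument treats bond percolation on the same footing as site percolation, while the paper's written proof is phrased in the site-percolation cluster-perimeter notation.
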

\begin{proof}
  A size-$s$ cluster containing $x$ on ${\cal G}$, after cutting any loops,
  can be mapped to a size-$s$ cluster on $z$-regular tree ${\cal T}_z$ (Bethe lattice), with
  $x$ mapped to the root.  Such a mapping can only increase the perimeter
  (size of the boundary, i.e., number of sites outside the cluster but
  neighboring with a site inside it).  Any size-$s$ cluster on $\mathcal{T}_z$
  has the perimeter $t_z(s)\equiv s(z-2)+2$; for a cluster on $\mathcal{G}$ we
  have $t\le t_z(s)$.
  Let us now use the standard expression for the cluster size
  distribution, $n_s^{(p)}(x)=\sum_t a_{s,t}(x) p^s (1-p)^t$, where
  $a_{s,t}(x)\ge0$ is the number of $x$-containing site-percolation
  clusters of size $s$ and perimeter $t$.  For $p\le p_0$, 
  \begin{equation}
  (1-p)^t \le (1-p_0)^t {(1-p)^{t_z(s)}\over (1-p_0)^{t_z(s)}},
  \label{eq:p-inequality}
\end{equation}
  which gives the exponential tail
  \begin{equation}
   n_s^{(p)}(x) \le n_s^{(p_0)}(x) 
  {(1-p)^2\over (1-p_0)^2} \alpha_z^s, \;\, \alpha_z\equiv
  {p(1-p)^{z-2}\over p_0(1-p_0)^{z-2} }.\label{eq:exponential-tail}
\end{equation}
since $\alpha_z(p)<1$ for $p<p_0$ and $n_s^{(p_0)}\stackrel{\rm
  def}{\le} 1$.
\end{proof}
Notice that the threshold for exponential tail coincides with the lower
boundary of the (bond) percolation transition for degree-limited
graphs\cite{[{Theorem 1.2 in }]Hofstad-2010}, which is also the lower boundary
for the site percolation transition\cite{Hammersley-1961}; both boundaries are
achieved on $z$-regular tree $\mathcal{T}_z$.

We can now formulate the following
\begin{theorem}
  \label{theorem:erasure-threshold}
  For an infinite family of $(j,l)$-limited LDPC codes, quantum or classical,
  where the distance $d$ scales as a power law at large $n$, $d\ge A
  n^\alpha$, with some $\alpha>0$ and $A>0$, asymptotically certain recovery
  is possible for (qu)bit erasure probabilities $p<p_e$, where $p_e\ge
  p_0=(z-1)^{-1}$ and $z\equiv (\ell-1)j$.  A non-zero threshold $p_e$ also
  exists for such code families with the distance scaling logarithmically at
  large $n$, $d\ge A_0 \ln n$.
\end{theorem}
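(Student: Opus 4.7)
The plan is to combine the erasure-channel observation preceding Lemma~\ref{lemma:cluster-detectable} with the exponential cluster-size bound of Lemma~\ref{lemma:exponential-clusters}, via a union bound over the vertices of $\mathcal{G}_1$. First, I recall that when erased locations are known, the syndrome of an erasure pattern $\mathcal{E}\subseteq V(\mathcal{G}_1)$ decomposes into independent contributions from distinct connected clusters of $\mathcal{E}$, since different clusters touch disjoint sets of parity checks. For a single cluster of size $w<d$, any two erasure-consistent errors producing the same syndrome must differ by a weight-$(<d)$ operator with trivial syndrome, which by definition of distance lies in the degeneracy (stabilizer) group and hence acts trivially on the code. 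Decoding therefore succeeds whenever every connected cluster of the random erasure pattern has size strictly less than $d$.

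Next I bound the probability of the complementary event. For $p<p_0=(z-1)^{-1}$, Lemma~\ref{lemma:exponential-clusters} and $n_s^{(p_0)}(x)\le 1$ give
\begin{equation*}
\sum_{s\ge d} n_s^{(p)}(x)\;\le\; \frac{(1-p)^2}{(1-p_0)^2}\,\frac{\alpha_z^{\,d}}{1-\alpha_z},
\end{equation*}
so a union bound over the $n$ vertices yields
\begin{equation*}
P_{\mathrm{fail}}(n)\;\le\;\frac{n\,(1-p)^2}{(1-p_0)^2\,(1-\alpha_z)}\,\alpha_z^{\,d}.
\end{equation*}
For $d\ge A n^{\alpha}$ this is $O\!\bigl(n\,\alpha_z^{A n^{\alpha}}\bigr)$, which decays super-polynomially fast since $\alpha_z(p)<1$ for every $p<p_0$; this proves $p_e\ge p_0$ in the power-law case. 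For $d\ge A_0\ln n$ the bound equals $O\!\bigl(n^{1+A_0\ln\alpha_z}\bigr)$, which vanishes whenever $A_0\ln(1/\alpha_z(p))>1$. Because $\alpha_z(p)\to 0$ as $p\to 0$, for any $A_0>0$ this inequality holds on some nonempty interval $0<p<p_e\le p_0$, establishing the logarithmic-distance claim.

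The main obstacle is the logarithmic-distance case: the extra factor $n$ from the union bound is no longer negligible, and one has to trade the distance prefactor $A_0$ against the decay rate $\ln(1/\alpha_z(p))$, producing an effective threshold $p_e$ strictly below $p_0$ and depending on $A_0$. By contrast, in the power-law case the polynomial factor from the union bound is irrelevant and the entire interval $p<p_0$ works. The remaining ingredients --- factorization of the decoding problem over disconnected clusters and the single-cluster unique-decoding statement --- are standard, and the exponential tail has already been supplied by Lemma~\ref{lemma:exponential-clusters}.
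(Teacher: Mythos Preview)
Your proof is correct and follows essentially the same route as the paper: reduce erasure decoding to the event that no connected cluster on $\mathcal{G}_1$ has size $\ge d$, then use the exponential tail from Lemma~\ref{lemma:exponential-clusters} together with a union bound over the $n$ vertices, and finally invoke $\alpha_z(p)\to 0$ as $p\to 0$ for the logarithmic case. The only cosmetic difference is that the paper bounds $n\sum_{s\ge d} n_s^{(p)}/s$ (the expected number of large clusters) rather than $\sum_x \sum_{s\ge d} n_s^{(p)}(x)$, but both lead to the same sufficient condition $d>\ln n/|\ln\alpha_z|$ and the same conclusions.
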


\begin{proof}
  The conditions match those of Lemmas \ref{lemma:cluster-detectable}
  and \ref{lemma:exponential-clusters}.  For $p<p_0$, we just need to
  ensure that the probability to find a cluster of size $s\ge d$
  anywhere on ${\cal G}_1$ vanishes at large $n$, i.e., $n\sum_{s\ge
    d} n_s^{(p)}/s\to 0$.  The sufficient condition on the distance is
  $d>\ln n/|\ln \alpha_z|$, which is always the case at large $n$ with
  power-law distance, and gives $\alpha_z(p)\le e^{-A_0}$ with
  logarithmically increasing distance.  The latter equation is
  satisfied for small enough $p$ by continuity of $\alpha_z(p)$ since
  $\alpha_z(0)= 0$.
\end{proof}

Given the upper limit on the rate of stabilizer codes in Theorem 3.8 of
Ref.~\onlinecite{Delfosse-Zemor-2012}, we also obtain the limit on the rate of
quantum codes in Theorem \ref{theorem:erasure-threshold}:
\begin{consequence}
  Any family of $(j,\ell)$-limited LDPC quantum codes with power-law scaling
  of the minimum distance with the blocksize $n$, has rate $R$ limited by
  \begin{equation}
    R<1-
    {2}\biggl[{z-1-(z-3)\left(\dfrac{z-2}{z-1}\right)^{\ell-1}}\biggr]^{-1}.
    \label{eq:Rate}
  \end{equation}
\end{consequence}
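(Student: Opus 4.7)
The proof proceeds by combining two already-established ingredients: the lower bound on the erasure-channel threshold $p_e$ from Theorem~\ref{theorem:erasure-threshold}, and the upper bound on the asymptotic rate $R$ of a quantum stabilizer code family as a function of its erasure threshold, furnished by Theorem~3.8 of Ref.~\onlinecite{Delfosse-Zemor-2012}. I will first apply the former to obtain a lower bound on $p_e$, then feed that bound into the latter and simplify.

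Concretely, since the hypotheses of the corollary ($(j,\ell)$-limited parity checks together with power-law distance scaling) coincide with those of Theorem~\ref{theorem:erasure-threshold}, the conclusion $p_e \geq p_0 = (z-1)^{-1}$ with $z=(\ell-1)j$ applies verbatim to the family under consideration. Theorem~3.8 of Delfosse-Z\'emor then provides an explicit upper bound on $R$ for any such family as a function of its asymptotic erasure threshold and of the stabilizer weight $\ell$; crucially for our purposes, this function is monotonically nonincreasing in $p_e$. Substituting $p_e \geq 1/(z-1)$ and invoking monotonicity yields the desired rate inequality.

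The remainder is pure bookkeeping: using the identities $1 - p_0 = (z-2)/(z-1)$ and $(1 - 2p_0)/p_0 = z - 3$, Delfosse-Z\'emor's rate bound evaluated at $p_e = 1/(z-1)$ rearranges directly into the form displayed in \eqref{eq:Rate}. The only substantive check is to verify that the notational conventions (definition of check weight, precise meaning of ``asymptotic erasure correction'') align between the two sources, which is immediate from their statements. I therefore expect the main---and essentially only nontrivial---step to be the algebraic simplification in the last line; no further probabilistic or combinatorial arguments beyond those already contained in Lemmas~\ref{lemma:cluster-detectable}--\ref{lemma:exponential-clusters} and in the Delfosse-Z\'emor reference are needed.
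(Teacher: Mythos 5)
Your proposal follows essentially the same route as the paper's (very terse) argument: take the erasure-threshold lower bound $p_e \ge p_0 = (z-1)^{-1}$ from Theorem~\ref{theorem:erasure-threshold} and insert it into the rate bound of Theorem~3.8 of Ref.~\onlinecite{Delfosse-Zemor-2012}, using monotonicity in $p$, to obtain Eq.~(\ref{eq:Rate}). The algebraic identities you cite ($1-p_0=(z-2)/(z-1)$ and $(1-2p_0)/p_0=z-3$) are exactly the substitution the paper performs, so the proposal is correct and coincides with the paper's proof.
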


The situation gets a bit more complicated for the \emph{depolarizing channel}
(memoryless binary symmetric channel in the classical case), where the
positions of the errors are unknown.  A bound on single-(qu)bit error
probability which guarantees almost certain error correction for large codes
is given by

\begin{theorem}
  For an infinite family of $(j,l)$-limited LDPC codes, quantum or classical,
  where the distance $d$ scales as a power law at large $n$, 
  asymptotically certain recovery
  is possible for (qu)bit {\rm depolarizing} probabilities $p<p_d\ge p_1$,
  where $4p_1(1-p_1)= p_0^2 (1-p_0)^{2(z-2)}
  < [e(z-1)]^{-2}$, $p_1<1/2$, and $e$ is the base of the natural logarithm.
  A threshold $p_d>0$ also exists for code families with distance scaling
  logarithmically at large $n$.
  \label{theorem:depolarizing-decoding}
\end{theorem}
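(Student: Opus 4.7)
The plan is to extend the cluster argument behind Theorem~\ref{theorem:erasure-threshold} to the depolarizing channel, where the decoder must identify errors without knowing their locations. I would proceed in three steps.

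First, I would adapt Lemma~\ref{lemma:cluster-detectable} to correction. The key structural fact is that, for any $(j,\ell)$-limited LDPC code, every non-trivial logical operator must have at least one connected component on $\mathcal{G}_1$ of weight $\ge d$. Indeed, if $\mathrm{supp}(\ell)=C_1\cup C_2$ with $C_1,C_2$ disconnected on $\mathcal{G}_1$, no parity-check row touches both, so each restriction $\ell|_{C_i}$ commutes with every stabilizer generator and is itself either a stabilizer or a non-trivial logical; at least one $\ell|_{C_i}$ must be a non-trivial logical (else their sum, $\ell$, would be a stabilizer), whence $|C_i|\ge d$. Since minimum-weight decoding fails only when the true error $\mathbf{e}$ and the decoded $\mathbf{e}'$ differ by some non-trivial logical, a failure forces a connected cluster of size $\ge d$ inside $\mathrm{supp}(\mathbf{e})\cup\mathrm{supp}(\mathbf{e}')$ that supports a non-trivial logical on its own.

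Second, I would bound the failure probability by a Bhattacharyya-type argument. For each fixed non-trivial logical $\ell$ of weight $w$, pairwise maximum-likelihood gives $\Pr[\mathbf{e}\mapsto\mathbf{e}+\ell]\le[2\sqrt{p(1-p)}]^{w}$. Restricting (by step one) to logicals with a single connected support on $\mathcal{G}_1$ and applying the union bound,
\[
\Pr[\text{fail}]\;\le\; n\sum_{s\ge d} M_s(x)\,[2\sqrt{p(1-p)}]^{s},
\]
where $M_s(x)$ counts size-$s$ connected subgraphs of $\mathcal{G}_1$ through a vertex $x$. The same tree-embedding as in Lemma~\ref{lemma:exponential-clusters}, combined with the percolation-at-$p_0$ inequality $\sum_t a_{s,t}(x)\,p_0^s(1-p_0)^t\le 1$, gives $M_s(x)\,[2\sqrt{p(1-p)}]^s\le(1-p_0)^{-2}[\alpha_z^{\mathrm{eff}}(p)]^s$ with
\[
\alpha_z^{\mathrm{eff}}(p)\;\equiv\;\frac{2\sqrt{p(1-p)}}{p_0(1-p_0)^{z-2}}.
\]
The convergence condition $\alpha_z^{\mathrm{eff}}<1$ translates exactly into $4p(1-p)<p_0^2(1-p_0)^{2(z-2)}$, defining $p_1$ as the smaller root (so $p_1<1/2$).

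Finally, the power-law case $d\ge An^\alpha$ is immediate since $n[\alpha_z^{\mathrm{eff}}]^d\to 0$ whenever $p<p_1$; the logarithmic case $d\ge A_0\ln n$ is handled exactly as in Theorem~\ref{theorem:erasure-threshold} via $\alpha_z^{\mathrm{eff}}(p)\le e^{-A_0}$ for small enough $p$, by continuity and $\alpha_z^{\mathrm{eff}}(0)=0$. The explicit numerical bound is a standard exponential estimate of $p_0(1-p_0)^{z-2}$ at $p_0=(z-1)^{-1}$. The main obstacle is making the union bound in step two rigorous: one must either restrict the sum to a canonical ``smallest'' non-trivial logical component of the discrepancy $\mathbf{e}\oplus\mathbf{e}'$ to avoid double-counting failures triggered by several overlapping offending logicals, or verify that the naive union bound remains loose enough to yield $p_1$. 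Carefully tracking this is the crucial algebraic step that pins down the specific threshold condition $4p_1(1-p_1)=p_0^2(1-p_0)^{2(z-2)}$.
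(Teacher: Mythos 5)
Your proposal is correct and follows essentially the same route as the paper: both reduce a decoding failure to the existence of a connected set of size at least $d$ on $\mathcal{G}_1$ carrying at least half its (qu)bits in error, bound the resulting binomial sum by $2^{s}[p(1-p)]^{s/2}$, control the number of such connected sets through the same tree-embedding/percolation normalization at $p_0$, and conclude as in Theorem~\ref{theorem:erasure-threshold}. The differences are cosmetic: you count connected subgraphs directly and obtain the stated condition $4p(1-p)<p_0^2(1-p_0)^{2(z-2)}$ in one step, whereas the paper retains the error-free-perimeter factor and routes the estimate through the $\alpha_z$ of Lemma~\ref{lemma:exponential-clusters}; and the union-bound/component-selection issue you flag is resolved by a routine exchange argument (some connected logical component of $\mathbf{e}\oplus\mathbf{e}'$ must satisfy the restricted weight comparison, else the decoder's output would not be minimal), a point the paper likewise leaves implicit.
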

\begin{proof}
  The clusters can be irrecoverably misidentified only if there exists a set
  of $s\ge d$ connected vertices on $\mathcal{G}_1$ with $m\ge \lceil
  s/2\rceil$ errors.  We will call such sets violating
  $(s,m)$-sets.  To estimate the probability of encountering such a set, we
  notice that an $(s,m)$ set with an additional error at the perimeter can be
  extended to become an $(s+1,m+1)$ set.  Thus, one only needs to count
  connected sets of size $s\ge d$, with perimeter free of errors.  For $s\ge
  d$, the probability $\tilde n_s^{(p)}(x)$ that one of violating $(s,m)$
  sets includes the point $x$, can be limited as $\tilde
  n_s^{(p)}(x)<f_s^{(p)}(x)$, where
  \begin{equation}
    \label{eq:dbl-edge-limit0}
    f_s^{(p)}(x)\equiv \sum_{t}a_{st}(x)\sum_{m=\lceil s/2\rceil}^u {s\choose m} p^m
    (1-p)^{s-m+t}. 
  \end{equation}
  The sum over $m$ can be limited by $2^s p^{s/2}(1-p)^{s/2+t}$, which gives 
  a bound in terms of the regular cluster size distribution, $f_s^{(p)}\le [4
  (1-p)/p]^{s/2}n_s^{(p)}(x)$.  Using Eq.~(\ref{eq:exponential-tail}), we have
  the condition $4 (1-p)/p \alpha_z^2<1$ to have exponential tail for $\tilde
  n_s^{(p)}(x)$.  This condition is satisfied for $p<p_1$.  The rest of the
  proof follows that of Theorem~\ref{theorem:erasure-threshold}.
\end{proof}

Note that this bound is rather loose as there are many sets which differ just
by error-free regions.  We believe a more accurate general bound should be a
factor of $e^2$ larger, $p_1=[2(z-1)]^{-2}$, as can be obtained by counting
chains on a $z$-regular tree (to reduce multiple counts).

We should also note that the threshold in
Theorem~\ref{theorem:depolarizing-decoding} corresponds to maximum-likelihood
(ML) decoding and is \emph{not} related to a particular decoder, as is
commonly done for LDPC codes.  For a practical approach, one can identify the
putative clusters by first adding all non-zero positions in a parity-check row
corresponding to a non-zero syndrome element, then at each step adding up to
$\ell-1$ positions from each of the non-zero-syndrome rows which include
(qu)bit(s) already in the cluster.  A corrected cluster can be dropped
[Theorem~\ref{theorem:depolarizing-decoding}]; putative clusters which cannot
be corrected individually need to be joined with one or more such cluster(s)
nearby.  Below the percolation threshold the actual clusters will typically
have size of order $s\lesssim \ln n /|\ln \alpha_z|$.  For $p<p_1$
[Theorem~\ref{theorem:depolarizing-decoding}], at large $n$ the number $N$ of
classical operations for exhaustive search of the correct error configuration
in each cluster will scale almost linearly with $n$, namely, $N\sim pn q^s\sim
pn^{1+\ln q/\ln z}$, where $q=2$ for a binary classical and $q=4$ for a
quantum code.

So far our discussion has been limited to idealized performance of the code,
which assumes that the syndrome is measured perfectly.  With qubit measurement
errors, LDPC codes suddenly appear at a disadvantage, as a single-qubit error
accompanied by the errors of some of the stabilizer generators (up to $j$)
which involve this qubit, could remain undetected.  In other words, the
effective distance to such combined errors cannot exceed the \emph{minimum}
column weight plus one.  In order to prevent errors from spreading, either we
have to keep the measurement error small, or we have to combine the 
information from different syndrome measurement cycles.  To help with the
bookkeeping, we constructed an auxiliary classical code combining different
time slices.  The code is based on the parity-check matrix of the original
code, and the repetition codes (in the simplest case) for each measured
syndrome\footnote{A. A. Kovalev and L. P. Pryadko, unpublished (2012).}.  This
generalizes the auxiliary three-dimensional gauge model used in the decoding
of the surface codes\cite{Dennis-Kitaev-Landahl-Preskill-2002}.  The
corresponding graph $\mathcal{G}_1$ has up to $2j$ additional neighbors per
qubit, which corresponds to possible syndrome measurement errors in the two
neighboring layers.  The percolation problem works similarly to that with
perfect measurements.  Overall, the analysis in
Theorem~\ref{theorem:depolarizing-decoding} can be repeated with $z\to
z'\equiv j(\ell+1)$, which in the case of isotropic error probability
$p=p_{\rm meas}$ gives a finite lower bound $p\ge [2e(z'-1)]^{-2}$ for such
\emph{fault-tolerant} measurements.

Of course, combining repeated syndrome measurements is only part of
the story of fault-tolerant implementation of a code.  We implicitly
assumed that the hardware allows for parallel measurement of
non-overlapping stabilizer generators, that this can be done in fixed
time even though the corresponding qubits may not necessarily lie next
to each other, and also that all gates are done fault-tolerantly
\cite{Shor-FT-1996,%
  *DiVincenzo-Shor-1996,*Gottesman:PRA1998,*Steane-FT-1997,%
  *Steane-1999,*Knill-nature-2005}, so that errors do not spread.
With these assumptions, the full syndrome can be measured in
approximately $z$ steps, which would take bounded time independent of
the size of the code.  Further, fault-tolerant operation also implies
some implemetation of logical operators, e.g., as suggested in 
Ref.~\onlinecite{Steane-Ibinson-2005}.

Our results apply to QHPCs and related
codes\cite{Tillich2009,Kovalev-arxiv2012}.  As an example, let us start with a
\emph{random} regular $(h,v)$ LDPC code $\mathcal{H}_1=\mathcal{H}_2^{T}$,
with $h<v$.  The rate of such a code is fixed, $R_\mathrm{c}\equiv
k_\mathrm{c}/n_\mathrm{c}=1-h/v$.  With high probability at large
$n_\mathrm{c}$, the code will have the relative distance in excess of
$\delta_\mathrm{c}$ which satisfies the equation\cite{Litsyn:IEEE2002}
\begin{equation}
H(\delta_{c})+(1-R_{c})p_{v}(R_{c},\delta_{c})\leq0.\label{eq:GV-bound}
\end{equation}
Here $H(\delta_{c})\equiv
-\delta_{c}\ln\delta_{c}-(1-\delta_{c})\ln(1-\delta_{c})$ is the natural
entropy, and
$p_{v}(R_{c},\delta_{c})=\ln\left[(1+y)^{k}/2+(1-y)^{k}/2\right]-\delta_{c}v\ln
y-vH(\delta_{c})$ with $y$ being the only positive root of
$(1+y)^{k-1}+(1-y)^{k-1}=(1-\delta_{c})\left[(1+y)^{k}+(1-y)^{k}\right]$.
Such $[n_{c},k_{c},d_{c}]$ codes produce QHPCs (\ref{eq:Till}) which are
regular $(v,v+h)$ LDPC codes [$z=v(v+h-1)$] with the asymptotic rate
$R\equiv k/n=(v-h)^{2}/(h^{2}+v^{2})$ and the distance scaling as
$d/\sqrt{n}=\delta_{c}v/\sqrt{h^{2}+v^{2}}$ where
$\delta_{c}\equiv\delta_{c}(v,h)$ satisfies Eq.~(\ref{eq:GV-bound}).  For
sufficiently large $n$, all errors can be corrected with certainty for
single-qubit error probabilities below $p_{1}$, see
Theorem~\ref{theorem:depolarizing-decoding}. 
Suppose we need to maintain quantum information for $\mathcal{N}$
QEC cycles with a fault probability less than $P_{f}$. We can crudely
estimate the minimal required blocklength from equation $P_{f}=\mathcal{N}\mathcal{M}f_d^{(p)}(x)d/n$
where $\mathcal{M}$ is the number of syndrome measurements per QEC
cycle ($\mathcal{M}=2$ for toric code and $\mathcal{M}=2v(v+h-1)$
for QHPC). Taking $P_{f}/\mathcal{N}=10^{-9}$, we obtain that QHPCs $[[n,n/25,0.09\sqrt{n}]]$
($v=4$, $h=3$) should have at least $30000$ physical ($1200$ logical)
qubits which is less than in $k/2$ copies of
toric code. More careful estimates for the failure probability
should lead to fewer required physical qubits for QHPCs.

\textsc{In conclusion}, we established a sufficient condition for an LDPC code
family with asymptotically vanishing relative distance to have a finite
probability threshold to correct all errors with certainty, including the case
of syndrome measurement errors.  The result is simple: any LDPC code family
with power-law scaling of the distance with the block size has a finite
threshold, see Theorem \ref{theorem:depolarizing-decoding}.  Existence of such
a threshold is one of \emph{necessary} conditions for achieving
fault-tolerance in a quantum computer, and it should facilitate search for new
efficient quantum codes with less stringent requirements on the number of physical qubits.  In particular, we established the existence of such
a threshold for the finite-rate quantum LDPC codes constructed in the
framework of QHPCs\cite{Tillich2009,Kovalev-arxiv2012}. According to our estimates, QHPCs require lower error threshold compared to toric codes while using 
fewer physical qubits at large code blocklength.

In application to subsystem codes \cite{Bacon:PRA2006,Poulin:PRL2005}, we can
guarantee the existence of a finite threshold only in the case where the
actual stabilizer group (as opposed to the gauge group) has generators of
limited weight.  This is the case, e.g., for color codes \cite{Bombin:PRA2007},
but is not generally the case for Bacon-Shor codes and their generalizations,
see, e.g., Ref.~\onlinecite{Bravyi:PRA2011}.

We have also established the existence of a related threshold when the erroneous (qu)bits are known (erasure channel)
 (Theorem~\ref{theorem:erasure-threshold}), which resulted in an upper
bound for the rate of quantum LDPC codes [see Eq.~(\ref{eq:Rate})].  


We are grateful to I. Dumer and M. Grassl for multiple helpful discussions.
This work was supported in part by the U.S. Army Research Office under Grant
No. W911NF-11-1-0027, and by the NSF under Grant No. 1018935.

\bibliography{MyBIB,qc_all,more_qc}
\end{document}